\newtheorem{Thm}{Theorem}
\newtheorem{Lem}[Thm]{Lemma}
\newtheorem{Claim}{Claim}
\newtheorem{Def}{Definition}
\newtheorem{Fact}{Fact}
\newcommand{\qed}{\hfill{$\rule{6pt}{6pt}$}} 
\newenvironment{proof}{\noindent{\bf Proof}:}{\qed}
\mathchardef\mhyphen="2D
\newcommand{\sQ}{{\mathsf{Q}}}
\newcommand{\sR}{{\mathsf{R}}}
\newcommand{\suppress}[1]{}
\newcommand{\myand}{{\sf {AND}}}
\newcommand{\myor}{{\sf {OR}}}
\newcommand{\andor}{{\sf {AND \mhyphen OR}}}
\newcommand\DISJ{\mathsf{DISJ}}
\newcommand\NDISJ{\mathsf{NDISJ}}
\newcommand\IP{\mathsf{IP}}
\begin{document}
\title{Depth-Independent Lower bounds on the Communication Complexity of Read-Once Boolean Formulas}
\author{ Rahul Jain\thanks{Centre for Quantum Technologies and Department of Computer Science, National University of Singapore. Email: {\tt rahul@comp.nus.edu.sg}}  
\quad Hartmut Klauck\thanks{Centre for Quantum Technologies, National University of Singapore. Email: {\tt hklauck@gmail.com}}
\quad Shengyu Zhang\thanks{Department of Computer Science and Engineering, The Chinese University of Hong Kong. Email: {\tt syzhang@cse.cuhk.edu.hk}}}
\date{}
\maketitle

\abstract{We show lower bounds of $\Omega(\sqrt{n})$ and $\Omega(n^{1/4})$ on the randomized and quantum communication complexity, respectively, of all $n$-variable read-once Boolean formulas. Our results complement the recent lower bound of $\Omega(n/8^d)$ by Leonardos and Saks~\cite{LeonardosS09} and $\Omega(n/2^{\Omega(d\log d)})$ by Jayram, Kopparty and Raghavendra~\cite{JayramKR09} for randomized communication complexity of read-once Boolean formulas with depth $d$.

We obtain our result by ``embedding" either the Disjointness problem or its complement in any given read-once Boolean formula.
}
\section{Introduction}
A {\em read-once Boolean formula}  $f : \{0,1\}^n  \rightarrow \{0,1\}$ is a function which can be represented by a Boolean formula involving $\myand$ and $\myor$ such that each variable appears, possibly negated, at most once in the formula.  An {\em alternating $\andor$ tree} is a layered tree in which each internal node is labeled either $\myand$ or $\myor$ and the leaves are labeled by variables; each path from the root to the any leaf alternates between $\myand$ and $\myor$ labeled nodes. It is well known (see eg.~\cite{HeimanW91}) that given a read-once Boolean formula $f: \{0,1\}^n  \rightarrow \{0,1\}$ there exists a unique alternating $\andor$ tree, denoted $T_f$, with $n$ leaves labeled by input Boolean variables $z_1, \ldots, z_n$, such that the output at the root, when the tree is evaluated according to the labels of the internal nodes, is equal to $f(z_1 \ldots z_n)$. Given an alternating $\andor$ tree $T$, let $f_T$ denote the corresponding read-once Boolean formula evaluated by $T$. 

Let $x,y \in \{0,1\}^n$ and let $x \wedge y, x\vee y $ represent the {\em bit-wise} $\myand, \myor$ of the strings $x$ and $y$ respectively.  For $f : \{0,1\}^n  \rightarrow \{0,1\}$, let $f^\wedge : \{0,1\}^n \times \{0,1\}^n  \rightarrow \{0,1\} $ be given by $f^\wedge(x,y) = f(x \wedge y) $. Similarly let $f^\vee : \{0,1\}^n \times \{0,1\}^n  \rightarrow \{0,1\} $ be given by $f^\vee(x,y) = f(x \vee y) $. Recently Leonardos and Saks~\cite{LeonardosS09}, investigated the {\em two-party randomized communication complexity}, denoted $\sR(\cdot)$, of $f^\wedge, f^\vee $ and showed the following.  (Please refer to~\cite{KushilevitzN97} for familiarity with basic definitions in communication complexity.)

\begin{Thm}[\cite{LeonardosS09}]\label{fact:LS09}
Let $f : \{0,1\}^n  \rightarrow \{0,1\}$ be a read-once Boolean formula such that $T_f$ has depth $d$. Then
\[\max \{\sR(f^\wedge), \sR(f^\vee)\} \geq\Omega(n/8^d).\]
\end{Thm}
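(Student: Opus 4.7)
The plan is to prove the bound via the information complexity framework, in the style of Bar-Yossef, Jayram, Kumar and Sivakumar, inducting on the depth $d$ of $T_f$. Suppose first that the root of $T_f$ is an $\myand$; the case of an $\myor$ root will be absorbed into the same induction as explained below. Write $f = \bigwedge_{i=1}^{k} g_i$, so that $f^\wedge(x,y) = \bigwedge_{i=1}^k g_i^\wedge(x_i,y_i)$ with $x=(x_1,\ldots,x_k)$ and $y=(y_1,\ldots,y_k)$, and each $g_i$ is a read-once formula on $n_i$ leaves of depth at most $d-1$ with $\sum_i n_i = n$.

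The first step is to exhibit, for each read-once formula $g$, a ``collapsing'' hard distribution $\mu_g$ on inputs to $g^\wedge$ (together with a dual distribution for $g^\vee$) equipped with an auxiliary variable $W$ such that (i) the output of the target function is nontrivial with constant probability, and (ii) conditioning on $W$ and on all coordinates except one sets the remaining children of the root $\myand$ to the absorbing value $1$, so that the residual problem is an independent copy of the appropriate $g_i^\wedge$ (respectively $g_i^\vee$) drawn from $\mu_{g_i}$. The second step is a direct-sum inequality: for any randomized protocol $\Pi$ computing $f^\wedge$ with constant error,
\[
I(X,Y:\Pi\mid W) \;\geq\; \frac{1}{c}\sum_{i=1}^{k} \mathrm{IC}_{\mu_{g_i}}(g_i^\wedge),
\]
where $c$ is an absolute constant. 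This follows by the usual embedding: two players holding an input drawn from $\mu_{g_i}$ can sample the remaining coordinates according to $\mu$ conditioned on $W$ using public and private randomness and then simulate $\Pi$; the information leaked about their input is at most a constant times the total information cost of $\Pi$.

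Combining the direct sum with the inductive hypothesis $\mathrm{IC}_{\mu_{g_i}}(g_i^\wedge)\geq n_i/8^{d-1}$ yields $I(X,Y:\Pi\mid W)\geq n/(c\cdot 8^{d-1})\geq n/8^d$ once $c$ is absorbed into the constant, and since communication cost upper bounds information cost we obtain $\sR(f^\wedge)\geq \Omega(n/8^d)$. The base case $d=0$ is a single variable and is trivial. Taking the maximum with the symmetric statement for $f^\vee$ delivers the theorem.

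The main obstacle is the alternation between $\myand$ and $\myor$ levels. When the root is an $\myor$, the natural reduction is to the subproblems $g_i^\vee$ rather than $g_i^\wedge$, and under the bitwise $\myand$ of inputs an $\myor$ gate does \emph{not} become absorbed by fixing other children to any fixed value. The induction must therefore simultaneously carry lower bounds for both $f^\wedge$ and $f^\vee$ at every depth, and the distributions $\mu_g$ must be designed so that collapsing yields the correct recursive hard distribution for whichever of the two problems appears at the next level. Executing this bookkeeping while losing only a constant factor per level --- and thereby obtaining $8^d$ rather than a larger $c^d$ in the denominator --- is the technical heart of the argument.
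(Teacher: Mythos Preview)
This theorem is not proved in the present paper at all: it is quoted as a result of Leonardos and Saks~\cite{LeonardosS09} (and the related bound of~\cite{JayramKR09}) and serves only as background to motivate the paper's own contribution, Theorem~\ref{thm:AND-OR}. So there is no ``paper's own proof'' to compare your proposal against.

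That said, your sketch does follow the information-complexity direct-sum paradigm that Leonardos--Saks actually use, and the issues you flag (carrying both $f^\wedge$ and $f^\vee$ through the induction, designing collapsing distributions that lose only a constant per level) are indeed the real work in their argument. By contrast, the present paper's method is entirely different and elementary: for Theorem~\ref{thm:AND-OR} it shows combinatorially (Lemma~\ref{lem:main} and Claim~\ref{claim:main}) that either $\DISJ_{\sqrt{n/2}}$ or $\NDISJ_{\sqrt{n/2}}$ embeds into $f^\wedge$ or $f^\vee$, and then invokes the known $\Omega(n)$ and $\Omega(\sqrt{n})$ bounds for Disjointness. No information complexity, no hard distributions, no induction on depth appear here, and the resulting bound is depth-independent but weaker in the shallow regime than the $\Omega(n/8^d)$ you are sketching.
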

In the theorem, the depth of a tree is the number of edges on a longest path from the root to a leaf. Independently, Jayram, Kopparty and Raghavendra~\cite{JayramKR09} proved randomized lower bounds of $\Omega(n/2^{\Omega(d\log d)})$ for general read-once Boolean formulas and $\Omega(n/4^d)$ for a special class of ``balanced" formulas. 

It follows from results of Snir \cite{Snir85} and Saks and Wigderson \cite{SaksW86} (via a generic simulation of trees by communication protocols \cite{BCW98}) that for the read-once Boolean formula with their canonical tree being a {\em complete binary} alternating $\andor$ trees, the randomized communication complexity is $O(n^{0.753\ldots})$, the best known so far. However in this situation, the results of \cite{LeonardosS09, JayramKR09} do not provide any lower bound since $d = \log_2 n$ for the complete binary tree. We complement their result by giving universal lower bounds that do not depend on the depth. Below $\sQ(\cdot)$ represents the two-party {\em quantum communication complexity}.

\begin{Thm}\label{thm:AND-OR}
Let $f: \{0,1\}^n  \rightarrow \{0,1\}$ be a read-once Boolean formula. Then,
\[\max \{\sR(f^\wedge ), \sR(f^\vee )\} \geq\Omega(\sqrt n).\]
\[\max \{\sQ(f^\wedge ), \sQ(f^\vee ) \} \geq\Omega(n^{1/4}).\]
\end{Thm}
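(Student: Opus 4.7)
The plan is to show that for every read-once Boolean formula $f$ on $n$ variables, either $\DISJ_m$ reduces to $f^\wedge$ or $\NDISJ_m$ reduces to $f^\vee$ for some $m = \Omega(\sqrt n)$. Combined with the classical lower bounds $\sR(\DISJ_m), \sR(\NDISJ_m) = \Omega(m)$ and the quantum lower bounds $\sQ(\DISJ_m), \sQ(\NDISJ_m) = \Omega(\sqrt m)$, this yields both parts of the theorem at once.

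First I would treat the case of a monotone $f$. For any subtree $T$ of $T_f$, let $A(T)$ be the largest $m$ for which one can partition the variables of $T$ into $m$ nonempty disjoint groups $S_1, \ldots, S_m$ together with an assignment to the remaining variables, such that forcing each group $S_i$ to take a common value $c_i$ makes the formula equal $\bigvee_{i=1}^m c_i$; define $B(T)$ analogously with $\bigwedge_i c_i$. The main lemma is the product bound $A(T_f) \cdot B(T_f) \geq n$, which yields $\max\{A(T_f), B(T_f)\} \geq \sqrt n$. To prove it by structural induction: for a single leaf, $A = B = 1$. At an $\myor$-node with children $T_1, \ldots, T_k$ of sizes $n_i$, concatenating the OR-embeddings of the children (since the root $\myor$ distributes over them) gives $A(T) \geq \sum_i A(T_i)$, while silencing all but one child $T_{i^\ast}$ (feasible since the variables of a read-once subtree can always be set to force its output to $0$) gives $B(T) \geq \max_i B(T_i)$. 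Multiplying,
\[ A(T) \cdot B(T) \;\geq\; \Bigl(\sum_i A(T_i)\Bigr) \cdot \Bigl(\max_j B(T_j)\Bigr) \;\geq\; \sum_i A(T_i)\,B(T_i) \;\geq\; \sum_i n_i \;=\; n, \]
using the inductive hypothesis $A(T_i) B(T_i) \geq n_i$. The $\myand$-case is dual, with the roles of $A$ and $B$ interchanged.

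Once $A(T_f) \geq m$ (the $B$ case is symmetric), the reduction is immediate: given a DISJ instance $(a, b) \in \{0,1\}^m \times \{0,1\}^m$, Alice sets $x_j := a_i$ and Bob sets $y_j := b_i$ for every $j \in S_i$, and both fix the remaining variables as prescribed by the embedding. Then $(x \wedge y)_j = a_i \wedge b_i$ is constant on each $S_i$, so the embedding property forces $f^\wedge(x, y) = \bigvee_i (a_i \wedge b_i) = \DISJ_m(a, b)$; the $B(T_f) \geq m$ case dually reduces $\NDISJ_m$ to $f^\vee$ via $x_j := \bar a_i$, $y_j := \bar b_i$.

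The main obstacle is the presence of negated literals, which breaks the naive base case: a leaf labeled $\bar z$ produces $f^\wedge = \bar x \vee \bar y$, which admits a reduction from $\NDISJ_1$ rather than $\DISJ_1$. The remedy is to split $A$ and $B$ into four parameters $A_\vee, A_\wedge, B_\vee, B_\wedge$ according to whether the target of the embedded reduction is $\DISJ$ or $\NDISJ$. Since $\DISJ$-reductions combine under OR (so they accumulate additively at $\myor$-nodes) and $\NDISJ$-reductions combine under AND (accumulate at $\myand$-nodes), while the opposite combinations only propagate via the silencing max, the refined inductive invariant $(A_\vee(T) + B_\vee(T))(A_\wedge(T) + B_\wedge(T)) \geq n$ goes through by the same argument. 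This implies $\max\{A_\vee, A_\wedge, B_\vee, B_\wedge\}(T_f) \geq \tfrac12 \sqrt{n}$ and hence the theorem.
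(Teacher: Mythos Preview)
Your core argument is exactly the paper's: define two embedding sizes and prove their product is at least $n$ by structural induction, using ``sum at the compatible gate, max at the other.'' Your $A,B$ correspond to the paper's $m_1,m_0$, and your inductive inequality is the content of the paper's Claim~1. Working with $\myor/\myand$ embeddings into $f$ rather than $\DISJ/\NDISJ$ embeddings into $f^\vee$ is a little cleaner --- it avoids the paper's odd/even level split --- but the substance is identical. (Cosmetic: your $\DISJ/\NDISJ$ naming is swapped relative to the paper's, and in the $B$-case reduction you want $x_j := a_i$, not $\bar a_i$.)

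The one real gap is your negation fix. The invariant $(A_\vee + B_\vee)(A_\wedge + B_\wedge) \geq n$ does \emph{not} go through ``by the same argument'': at an $\myor$-node both $B_\vee$ and $B_\wedge$ inherit only a max from the children, so the cross term $\max_j B_\vee(T_j)\cdot\max_j B_\wedge(T_j)$ need not dominate $\sum_i B_\vee(T_i)B_\wedge(T_i)$, and the product can fail to grow with $n$. A simpler repair is available and stays within your framework: your inductive construction always produces \emph{single-leaf} groups, so once $A(T_f)\geq m$ you have $m$ distinguished leaves, at least $m/2$ of which carry the same sign; restricting to those embeds the size-$m/2$ problem into $f^\wedge$ (if they are positive) or into $f^\vee$ (if negative, via local bit-flips), and you still conclude $\Omega(\sqrt n)$.
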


\vspace{0.1in}

\noindent {\bf Remark:} 
\begin{enumerate}
	\item Note that the maximum in Thoerem \ref{fact:LS09} and \ref{thm:AND-OR} is necessary since for example if  $f$ is the $\myand$ of the $n$ input bits then it is easily seen that $\sR(f^\wedge)$ is 1.
	\item This fact is easy to observe for balanced trees, as is also remarked in \cite{LeonardosS09}.
\end{enumerate}

\section{Proofs}
In this section we show the proof of Theorem~\ref{thm:AND-OR}. We start with the following definition.

\begin{Def}[Embedding]
We say that a function $g_1: \{0,1\}^r \times \{0,1\}^r  \rightarrow \{0,1\}$ can be {\em embedded} into a function  $g_2 :\{0,1\}^t \times \{0,1\}^t  \rightarrow \{0,1\}$, if there exist maps $h_a : \{0,1\}^r  \rightarrow \{0,1\}^t$ and $h_b : \{0,1\}^r  \rightarrow \{0,1\}^t$ such that $\forall x,y \in \{0,1\}^r, \; g_1(x,y) = g_2(h_a(x), h_b(y))$.
\end{Def}
It is easily seen that if $g_1$ can be embedded into $g_2$ then the communication complexity of $g_2$ is at least as large as that of $g_1$.

Let us define the {\em Disjointness} problem $\DISJ_n : \{0,1\}^n \times \{0,1\}^n  \rightarrow \{0,1\}$ as $\DISJ_n(x,y)=\bigwedge_{i=1,\ldots,n} (x_i\vee y_i)$ (where the usual negation of the variables is left out for notational simplicity). Similarly the {\em Non-Disjointness} problem $\NDISJ_n : \{0,1\}^n \times \{0,1\}^n  \rightarrow \{0,1\}$ is defined as $\NDISJ_n(x,y)=\bigvee_{i=1,\ldots,n} (x_i\wedge y_i)$. We shall also use the following well-known lower bounds. 
\begin{Fact}[\cite{KalyanasundaramS92, Razborov92}]\label{fact: random lb}
$\sR(\DISJ_n)=\Omega(n), \sR(\NDISJ_n)=\Omega(n)$.
\end{Fact}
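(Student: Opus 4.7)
The plan is to prove $\sR(\DISJ_n) = \Omega(n)$ via Yao's minimax principle, and then derive $\sR(\NDISJ_n) = \Omega(n)$ by observing that the two functions differ only by input/output negations, which preserve communication complexity. So I would focus the bulk of the work on $\DISJ_n$.

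To apply Yao, I would exhibit a hard distribution $\mu$ against which every deterministic protocol computing $\DISJ_n$ with small constant error $\eta$ requires $\Omega(n)$ bits. Following Razborov, take $\mu$ supported on pairs $(x,y)$ of fixed Hamming weight $n/4$ (viewed as subsets of $[n]$): with probability $3/4$ draw $(x,y)$ uniformly from the configurations with $|x \cap y|=0$, and with probability $1/4$ from the configurations with $|x \cap y|=1$. Let $\mu_0, \mu_1$ denote these two conditional subdistributions. After accounting for the input-negation convention in the definition, $\DISJ_n$ is constant on each support and takes opposite values on $\mu_0$ and $\mu_1$.

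The crux of the argument is a corruption lemma: there exist absolute constants $\epsilon, \delta > 0$ such that every combinatorial rectangle $R = A \times B$ satisfies $\mu_1(R) \geq \delta \cdot \mu_0(R) - 2^{-\epsilon n}$. Intuitively, a rectangle that captures a non-negligible fraction of disjoint pairs cannot avoid capturing a proportional fraction of unique-intersection pairs, because for a uniformly random pair of weight-$n/4$ subsets the number of collisions is approximately Poisson with constant mean, and this robustness survives the conditioning imposed by $A \times B$. Proving this inequality is the main obstacle: one must analyze, uniformly in $R$, how conditioning on $y \in B$ distorts the slice distribution of intersections $|x \cap y|$ for a typical $x \in A$. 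Razborov's original approach is a direct combinatorial switching argument that locally modifies a disjoint pair into a unique-intersection pair and controls how the perturbation interacts with $A$ and $B$; an alternative route uses information-theoretic inequalities on the marginals of $\mu_0$ and $\mu_1$ restricted to $R$.

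Given the corruption lemma, the conclusion is standard. A deterministic protocol of cost $c$ partitions the input space into at most $2^c$ monochromatic rectangles. For the protocol to have error at most $\eta$ under $\mu$, the rectangles labeled with the $\mu_0$-value must carry almost all $\mu_0$-mass and almost no $\mu_1$-mass; summing the corruption inequality over these rectangles forces $\eta \geq \delta (3/4 - \eta) - 2^c \cdot 2^{-\epsilon n}$, which is a contradiction for small enough $\eta$ unless $c = \Omega(n)$. By Yao's principle this transfers to $\sR(\DISJ_n) = \Omega(n)$, and the bound for $\sR(\NDISJ_n)$ follows from the negation reduction noted at the outset.
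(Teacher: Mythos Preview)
The paper does not prove this statement at all: it is recorded as a Fact with citations to \cite{KalyanasundaramS92, Razborov92} and used as a black box, so there is no ``paper's own proof'' to compare against. Your outline is a faithful sketch of Razborov's corruption-bound argument and is correct at the level of detail you give; the reduction from $\NDISJ_n$ to $\DISJ_n$ via input/output negation is also fine. If you intend to actually carry out the proof, the only real work, as you note, is the corruption lemma, and you have identified the standard routes to it; everything else is routine.
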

\begin{Fact}[\cite{Razborov03}]\label{fact: quantum lb}
$\sQ(\DISJ_n)=\Omega(\sqrt n), \sQ(\NDISJ_n)=\Omega(\sqrt n)$.
\end{Fact}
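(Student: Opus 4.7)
The plan is to prove $\sQ(\NDISJ_n) = \Omega(\sqrt n)$ via Razborov's polynomial/spectral approach; the bound for $\sQ(\DISJ_n)$ then follows for free, since in the paper's formulation $\DISJ_n(\bar x, \bar y) = \neg\, \NDISJ_n(x,y)$, and both input complementation and output negation are free for quantum protocols.

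The first step is to convert a $c$-qubit bounded-error quantum protocol for $\NDISJ_n$ into an analytic object. The standard Kremer--Yao decomposition expresses the acceptance probability as an inner product of unit vectors of dimension $2^{O(c)}$. Consequently, the $\pm 1$ communication matrix $M$ of $\NDISJ_n$ admits an entrywise $1/3$-approximation whose $\gamma_2$-norm (equivalently, suitably normalized trace norm) is at most $2^{O(c)}$. So it suffices to show that the \emph{approximate} $\gamma_2$-norm of $M$ is $2^{\Omega(\sqrt n)}$.

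The second step is to lower-bound this approximate norm by exhibiting a dual witness: a signed matrix $\Psi$ with large inner product $\langle \Psi, M\rangle$ relative to both $\|\Psi\|_1$ and the dual norm $\gamma_2^*(\Psi)$, so that no low-$\gamma_2$ entrywise approximation of $M$ can correlate with it. The canonical recipe is to ``lift'' a dual polynomial for $\myor_n$: a signed function $\psi$ on $\{0,1\}^n$ with $\sum_z |\psi(z)| = 1$, $\sum_z \psi(z)\myor_n(z) = \Omega(1)$, and $\widehat{\psi}(S) = 0$ for all $|S| < d$ with $d = \Omega(\sqrt n)$. The existence of $\psi$ is the LP dual of the Nisan--Szegedy theorem $\widetilde{\deg}_{1/3}(\myor_n) = \Theta(\sqrt n)$, which itself rests on Chebyshev polynomial extremal estimates. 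Lifting $\psi$ to a matrix $\Psi$ indexed by $(x,y)$ by letting $\Psi(x,y)$ depend on $\psi(x \wedge y)$ (up to a symmetrization over coordinates) combines the Fourier vanishing of $\psi$ with the combinatorial structure of $\NDISJ_n = \myor_n(x \wedge y)$ to yield the required dual witness.

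The main technical obstacle is the spectral analysis of $\Psi$: one must control $\gamma_2^*(\Psi)$ (equivalently, the spectral norm of a rescaling) using only the degree-$d$ Fourier vanishing of $\psi$. Razborov accomplishes this through a symmetry/averaging argument over the natural symmetric-group action on coordinates, which block-diagonalizes $\Psi$ into irreducible components whose spectra can be computed or bounded explicitly in terms of the Fourier levels of $\psi$. Everything else---the scalar dual polynomial for $\myor_n$, the Kremer--Yao protocol decomposition, and the passage from $\NDISJ_n$ to $\DISJ_n$---is standard once this core spectral/Fourier estimate is in hand.
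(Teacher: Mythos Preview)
The paper does not prove this statement at all: it is stated as a \emph{Fact} with a citation to Razborov's paper and is used as a black box. There is therefore no ``paper's own proof'' to compare against.

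Your sketch is a reasonable high-level outline of a proof, though it is worth noting that it blends Razborov's original 2003 argument with the later generalized-discrepancy/$\gamma_2$ formalism of Linial--Shraibman and Sherstov. Razborov himself worked directly with the trace norm of an explicit $\pm$-weighted submatrix supported on inputs of fixed Hamming weights $n/4$, $n/4$ with intersection size $0$ or $1$, and computed its singular values via the Hahn/Johnson-scheme combinatorics (your ``symmetric-group block-diagonalization''). The dual-polynomial-lifting picture you describe is the cleaner modern repackaging; it yields the same $\Omega(\sqrt n)$, and the reduction from $\DISJ_n$ to $\NDISJ_n$ you give is fine in the paper's conventions. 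Nothing is wrong, but be aware that if you were asked to reproduce \emph{Razborov's} proof specifically, the $\gamma_2$/dual-polynomial language is an anachronism.
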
 

Recall that for the given read-once Boolean formula $f: \{0,1\}^n  \rightarrow \{0,1\}$ its the canonical tree is denoted $T_f$. We have the following lemma which we prove in Section \ref{sec: lemma}.
\begin{Lem}\label{lem:main}
\begin{enumerate}
\item Let $T_f$ have its last layer consisting only of $\myand$ gates. Let $m_0$ be the largest integer such that $\DISJ_{m_0}$ can be embedded into $f^\vee$ and $m_1$ be the largest integer such that $\NDISJ_{m_1}$ can be embedded into $f^\vee$. Then $m_0m_1\geq n$.
\item Let $T_f$ have its last layer consisting only of $\myor$ gates. Let $m_0$ be the largest integer such that $\DISJ_{m_0}$ can be embedded into $f^\wedge$ and $m_1$ be the largest integer such that $\NDISJ_{m_1}$ can be embedded into $f^\wedge$. Then $m_0m_1\geq n$.
\end{enumerate}
\end{Lem}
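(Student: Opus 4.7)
My plan is to prove Part~(1) by structural induction on $T_f$, viewing $m_0$ and $m_1$ as functions of the tree; Part~(2) will then follow from the AND/OR, $\DISJ$/$\NDISJ$, $\vee$/$\wedge$ duality applied to the dual tree. Throughout I repeatedly use the elementary observation that any subtree can be forced to the constant $0$ (resp.\ $1$) by setting all its leaves to $0$ (resp.\ $1$), since this constant propagates through both AND and OR gates.

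The base case is $T_f$ being a single bottom AND with $k \geq 2$ leaves. Then $f^\vee = \DISJ_k$ gives $m_0 \geq k$ at once, and the ``two-leaf trick'' gives $m_1 \geq 1$: assigning $(x_1,y_1) = (a,0)$, $(x_2,y_2) = (0,b)$, and $(x_i,y_i) = (1,1)$ for $i \geq 3$ makes the AND evaluate to $a \wedge b = \NDISJ_1(a,b)$. Hence $m_0 m_1 \geq k = n$.

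For the inductive step, suppose the root of $T_f$ has children whose subtrees are $T_1, \ldots, T_r$ of sizes $n_1, \ldots, n_r$. Each $T_j$ still has its last layer consisting of AND gates, so the induction hypothesis yields $m_0(T_j) m_1(T_j) \geq n_j$. If the root is OR, then embedding $\DISJ$ in the single best $T_{j^*}$ while zeroing the other subtrees gives $m_0(T_f) \geq \max_j m_0(T_j)$, whereas combining disjoint-variable $\NDISJ$-embeddings in every $T_j$ through the root OR gives $m_1(T_f) \geq \sum_j m_1(T_j)$. The root-AND case is symmetric: ANDing disjoint $\DISJ$-embeddings across the $T_j$ gives $m_0(T_f) \geq \sum_j m_0(T_j)$, and using an $\NDISJ$-embedding in the best $T_{j^*}$ while forcing the other subtrees to $1$ gives $m_1(T_f) \geq \max_j m_1(T_j)$. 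In either case the rearrangement $(\max_j X_j)(\sum_j Y_j) \geq \sum_j X_j Y_j$ yields
\[
m_0(T_f) \cdot m_1(T_f) \;\geq\; \sum_j m_0(T_j) m_1(T_j) \;\geq\; \sum_j n_j \;=\; n.
\]

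The main delicate point, and the principal obstacle, is the two-leaf trick at the base: it seeds the $\NDISJ$-side of the induction, and without it the base case would yield $m_1 = 0$ and the whole argument would collapse. This implicitly requires $T_f$ to be in canonical form where every internal node (in particular every bottom AND) has at least two children, a harmless assumption for read-once formulas.
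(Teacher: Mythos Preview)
Your approach is essentially the paper's: structural induction, with $m_0$ summing across children at $\myand$ nodes and $m_1$ summing at $\myor$ nodes, the other quantity taking a max, and the rearrangement $(\max_j X_j)(\sum_j Y_j)\geq \sum_j X_jY_j$ to close the step. The paper merely phrases the induction on $T_{f^\vee}$ rather than on $T_f$, which is cosmetic.

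There is, however, a real gap in your root-$\myand$ inductive step. You write ``each $T_j$ still has its last layer consisting of $\myand$ gates, so the induction hypothesis yields $m_0(T_j)m_1(T_j)\geq n_j$.'' But an $\myand$ root may have \emph{leaf} children (a leaf whose parent is $\myand$ is fully consistent with the hypothesis on $T_f$), and for such a one-variable subtree $g(z)=z$ one has $g^\vee(x,y)=x\vee y$, so $m_0=1$ while $m_1=0$; the claimed inequality $m_0m_1\geq n_j=1$ fails. (This cannot arise in the root-$\myor$ case, since the ``last-layer-$\myand$'' hypothesis forbids $\myor$ nodes from having leaf children.) The paper faces exactly this issue---its depth-$1$ subtrees in the induction are precisely these leaves as seen inside $T_{f^\vee}$---and deals with it by separating out the $r'$ such children: each contributes $1$ to $\sum_j m_0(T_j)$, while the remaining deeper subtrees all satisfy $m_1(T_j)\geq 1$, giving
\[
\Bigl(r'+\sum_{j>r'} m_0(T_j)\Bigr)\cdot \max_{j>r'} m_1(T_j)\;\geq\; r'\cdot 1+\sum_{j>r'} m_0(T_j)m_1(T_j)\;\geq\; r'+\sum_{j>r'} n_j\;=\;n.
\]
With this one-line patch your argument goes through; your final paragraph about the two-leaf trick seeding $m_1\geq 1$ is the right observation, but it is needed not only at the base case---it (or the induction hypothesis on a deeper sibling) is what supplies the factor $\geq 1$ above.
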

With this lemma, we can prove the lower bounds on $\max \{\sR(f^\wedge ), \sR(f^\vee )\}$ and $ \max \{\sQ(f^\wedge ), \sQ(f^\vee ) \}$ as follows. For an arbitrary read-once formula $f$ with $n$ variables, consider the sets of leaves 
\[
L_{odd} = \{\text{leaves in } T_f \text{ on odd levels}\}, \qquad  L_{even} = \{\text{leaves in } T_f \text{ on even levels}\}
\]
At least one of the two sets has size at least $n/2$; without loss of generality, let us assume that it is $L_{odd}$. Depending on whether the root is $\myand$ or $\myor$, this set consisting only of $\myand$ gates or $\myor$ gates, corresponding to case 1 or 2 in Lemma~\ref{lem:main}. Then by the lemma, either $\DISJ_{\sqrt{n/2}}$ or $\NDISJ_{\sqrt{n/2}}$ can be embedded in $f$ (by setting the leaves in $L_{even}$ to 0's). By Fact \ref{fact: random lb} and \ref{fact: quantum lb}, we get the lower bounds in Theorem ~\ref{thm:AND-OR}.

\subsection{Proof of Lemma ~\ref{lem:main}}\label{sec: lemma}
We shall prove the first statement; the second statement follows similarly. We first prove the following claim. 
\begin{Claim} \label{claim:main}
For an $n$-leaf $(n>2)$ alternating $\andor$ tree $T$ such that all its internal nodes just above the leaves have exactly two children (denoted the $x$-child and the $y$-child), let $s(T)$ denote the number of such nodes directly above the leaves. 
Let $m_0(T)$ be the largest integer such that $\DISJ_{m_0}$ can be embedded into $f_T$ and $m_1(T)$ be the largest integer such that $\NDISJ_{m_1}$ can be embedded into $f_T$. Then $m_0(T) m_1(T) \geq s(T)$.
\end{Claim}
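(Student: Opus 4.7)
The plan is to induct on $s(T)$. I assume without loss of generality that the root of $T$ is an AND gate; the OR case is symmetric. By alternation, each non-leaf child of the root is OR-rooted, and any leaf child of the root may be fixed to the constant $1$ under the embedding, so I may assume the root has no leaf children. Write the internal children as $T_1, \ldots, T_k$, and partition them into \emph{small} children---those that are themselves bottom internal nodes (bottom OR gates with exactly two leaf children, for which $s=1$, $m_0=1$, $m_1=0$)---and \emph{large} children, to which the inductive hypothesis $m_0(T_i)m_1(T_i)\ge s(T_i)$ applies.

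Two structural inequalities follow directly from the definition of embedding. First, $m_0(T) \ge \sum_{i} m_0(T_i)$: because $\DISJ_m$ is an AND of ORs, I embed $\DISJ_{m_0(T_i)}$ into each $f_{T_i}$ using disjoint Alice/Bob coordinates, and the root's AND composes them into $\DISJ_{\sum_i m_0(T_i)}$. Second, $m_1(T) \ge \max_i m_1(T_i)$: to embed $\NDISJ_m$ I retain the structure of a single child, forcing every other $f_{T_i}$ to the constant $1$ (possible since an OR is $1$ as soon as any of its leaf descendants is set to $1$). I will also need the auxiliary observation that any OR-rooted tree containing at least one internal AND descendant satisfies $m_1 \ge 1$ (and symmetrically for AND-rooted trees): fix a root-to-leaf path ending at a bottom AND node, set every off-path child of an AND ancestor to $1$ and every off-path child of an OR ancestor to $0$, so the tree collapses to that single bottom AND computing $x\wedge y=\NDISJ_1$.

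With these in hand, let $L$ denote the number of small children and set $m^{\ast} \defeq \max_{\text{large }i} m_1(T_i)$, which is $\ge 1$ by the auxiliary observation whenever any large child exists (a large child is a non-bottom OR-rooted subtree and therefore has an internal AND descendant). Using the inductive hypothesis on each large $T_i$,
\[
m_0(T)\,m_1(T) \;\ge\; \Bigl(L + \sum_{\text{large }i} m_0(T_i)\Bigr) m^{\ast} \;\ge\; L m^{\ast} + \sum_{\text{large }i} m_0(T_i)\, m_1(T_i) \;\ge\; L + \sum_{\text{large }i} s(T_i) \;=\; s(T).
\]
The remaining case is when every child of the root is small: then $f_T = \DISJ_L$ with $L \ge 2$ (forced by $n>2$), and the explicit assignment $h_a(x) = (x,0,1,\ldots,1)$, $h_b(y) = (0,y,1,\ldots,1)$ in $\{0,1\}^L$ embeds $\NDISJ_1$ into $\DISJ_L$, giving $m_1(T) \ge 1$ and hence $m_0(T)\,m_1(T) \ge L = s(T)$.

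The main obstacle is precisely the handling of the small bottom children: individually they violate $m_0 m_1 \ge s$, so a naive subtree-wise induction fails. The argument instead pivots on ensuring that some large sibling supplies $m^{\ast} \ge 1$ which absorbs the $L$ excess $\DISJ$-contributions of the small children, together with a direct verification---via the explicit $\NDISJ_1$ embedding into $\DISJ_L$---of the edge case in which no large sibling is available.
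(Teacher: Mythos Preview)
Your proof is correct and follows essentially the same approach as the paper: both argue by induction over the tree (you on $s(T)$, the paper on depth $d$), separate the root's children into depth-one ``small'' subtrees and deeper ``large'' subtrees, invoke additivity of $m_0$ across the root $\myand$ together with the max-bound for $m_1$, secure $m_1\ge 1$ so that the small children's $L$ contributions are absorbed, and finish with the identical product calculation $(\,L+\sum m_0(T_i)\,)\cdot\max\{\ldots,1\}\ge L+\sum s(T_i)=s(T)$. Your all-small edge case is precisely the paper's base case $d=2$, and your explicit embedding of $\NDISJ_1$ into $\DISJ_L$ via $h_a(x)=(x,0,1,\ldots,1)$, $h_b(y)=(0,y,1,\ldots,1)$ matches (and in fact slightly sharpens) the paper's base-case construction.
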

\begin{proof}
The proof is by induction on depth $d$ of $T$. When $n>2$, the condition of the tree makes $d>1$, so the base case is $d=2$. 

\vspace{.1in} \noindent {\bf Base Case $d=2$:} In this case $T$ consists either of the root labeled $\myand$ with $s(T)$ (fan-in $2$) children labeled $\myor$s or it consists of the root labeled $\myor$ with $s(T)$ (fan-in $2$) children labeled $\myand$s. We consider the former case and the latter follows similarly. In the former case $f_T$ is clearly $\DISJ_{s(T)}$ and hence $m_0(T) = s(T)$. Also $m_1(T) \geq 1$ as follows. Let us choose the first two children $v_1, v_2 $ of the root. Further choose the $x$ child of $v_1$ and the $y$ child of $v_2$ which are kept free and the values of all other input variables are set to $0$. It is easily seen that the function (of input bits $x,y$) now evaluated is $\NDISJ_1$. Hence $m_0(T)m_1(T) \geq s(T)$.

\vspace{0.1in}

\noindent {\bf Induction Step $d > 2$:} Assume the root is labeled $\myand$ (the case when the root  is labeled $\myor$ follows similarly). Let the root have $r$ children $v_1, \ldots, v_r $ which are labeled $\myor$ and let the corresponding subtrees be $T_1, \ldots, T_r$ rooted at $v_1, \ldots, v_r $ respectively. Let without loss of generality the first $r'$ (with $0 \leq r' \leq r$) of these trees be of depth $1$ in which case the corresponding $s(\cdot) = 0$. It is easily seen that 
$$
	s(T) = r' + \left(\sum_{i=r'+1}^r s(T_i)\right)   \enspace.
$$
For $i>r'$, we have from the induction hypothesis that $m_1(T_i)m_0(T_i)\geq s(T_i)$. 

It is clear that  $m_0(T) \geq \sum_{i=1}^r m_0(T_i)$, since we can simply combine the Disjointness instances of the subtrees.
Also we have $m_1(T) \geq \max\{m_1(T_{r'+1}),\ldots, m_1(T_r),1\}$, because we can either take any one of the subtree instances (and set all other inputs to $0$), or at the very least can pick a pair of $x,y$ leaves (as in the base case above) and fix the remaining variables appropriately to realize a single $\myand$ gate which amounts to embedding $\NDISJ_1$. 
Now,
\begin{eqnarray*}
m_0(T)m_1(T) & \geq & \left(\sum_{i=1}^r m_0(T_i)\right) \cdot \left(\max\{m_1(T_1),\ldots, m_1(T_r),1\}\right) \\
& \geq & r' + \left(\sum_{i=r'+1}^r m_0(T_i)m_1(T_i)\right) \\
& \geq & r' + \left(\sum_{i=r'+1}^r s(T_i) \right) = s(T) \enspace .
\end{eqnarray*}
\end{proof}
\vspace{0.2in}

Now we prove Lemma~\ref{lem:main}: Let us view $f^\vee : \{0,1\}^{2n}  \rightarrow \{0,1\}$ as a read-once Boolean formula, with input $(x,y)$ of $f^\vee$ corresponding to the $x$- and $y$- children of the internal nodes just above the leaves. 
Note that in this case $T_{f^\vee}$ satisfies the conditions of the above claim and $s(T_{f^\vee}) = n$. Hence the proof of the first statement in Lemma~\ref{lem:main} finishes. 

\section{Concluding Remarks}
\begin{enumerate}
\item The randomized communication complexity varies between $\Theta(n)$ for the $\mathsf{Tribes}_n$ function (a read-once Boolean formula whose canonical tree has depth $2$) \cite{JKS03} and $O(n^{0.753\ldots})$ for functions corresponding to completely balanced $\andor$ trees (which have depth $\log n$).  It will probably be hard to prove a generic lower bound much larger than $\sqrt{n}$ for all read-once Boolean formulas $f : \{0,1\}^n \rightarrow \{0,1\}$, since the best known lower bound on the randomized query complexity of every read-once Boolean formula is $\Omega(n^{.51})$~\cite{HeimanW91} and communication complexity lower bounds immediately imply slightly weaker query complexity lower bounds (via the generic simulation of trees by communication protocols~\cite{BCW98}). 

\item Ambainis et al.~\cite{AmbainisCRSS07} show how to evaluate any alternating $\andor$ tree $T$ with $n$ leaves by a quantum query algorithm with slightly more than $\sqrt n$ queries; this also gives the same upper bound for the communication complexity of $\max\{\sQ(f_T^\wedge), \sQ(f_T^\vee)\}$. On the other hand, it is easily seen that the {\em parity} of $n$ bits can be computed by a formula of size $O(n^2)$ involving  $\myand, \myor$. Therefore it
is easy to show that the function {\em Inner Product modulo $2$} i.e. the function $\IP_m : \{0,1\}^m \times \{0,1\}^m  \rightarrow \{0,1\}$ given by $\IP_m(x,y)=\sum_{i=1}^m x_iy_i \mbox{ mod } 2$, with  $m = \sqrt n$ can be reduced to the evaluation of an alternating $\andor$ tree of size $O(n)$ and logarithmic depth. Since it is known that $\sQ(\IP_{\sqrt{n}}) = \Omega(\sqrt n)$~\cite{CleveDNT99}, we get an example of an alternating $\andor$ tree $T$ with $n$ leaves and $\log n$ depth such that $\sQ(f_T^\wedge) = \Omega(\sqrt{n})$. Since the same lower bound also holds for shallow trees such as $\myor$, hence $\Theta(\sqrt n)$ might turn out to be the correct bound on $\max\{\sQ(f_T^\wedge), \sQ(f_T^\vee)\}$ for all alternating $\andor$ trees $T$ with $n$ leaves regardless of the depth. 
\end{enumerate}

\vspace{0.1in}

\noindent {\bf Acknowledgments:}
Research of Rahul Jain and Hartmut Klauck is supported by the internal grants of the
Centre for Quantum Technologies (CQT), which is funded by the Singapore
Ministry of Education and the Singapore National Research Foundation. Research of Shengyu Zhang is partially supported by the Hong Kong grant RGC-419309, and partially by CQT for his research visit.
\bibliographystyle{alpha}
\bibliography{andor}
\end{document}